\documentclass[12pt]{amsart}
\usepackage{amsmath,amscd,amsfonts,amssymb, color,bbm, bm,
braket}
\usepackage{latexsym, graphicx, pstricks,rotating, enumerate}
\usepackage[pdftex,bookmarks,colorlinks]{hyperref}
\definecolor{darkblue}{rgb}{0.0,0.0,0.3}
\hypersetup{colorlinks=true,citecolor=darkblue,
unicode=true,pdftitle=extend.pdf,pdfauthor=Ritabrata
Sengupta,bookmarks=false, urlcolor=darkblue} 
\numberwithin{equation}{section}
\newtheorem{prop}{Proposition}[section]
\newtheorem{defin}{Definition}[section]
\newtheorem{rem}{Remark}[section]

\newtheorem{theorem}{Theorem}[section]
\newtheorem{lemma}{Lemma}[section]

\newenvironment{customthm}[1]
  {\innercustomthm}
  {\endinnercustomthm}
\setlength{\oddsidemargin}{0pt}
\setlength{\evensidemargin}{0pt}
\setlength{\textwidth}{6.7in}
\setlength{\topmargin}{0in}
\setlength{\textheight}{8.5in}

\newcommand{\tr}{\mathrm{Tr\,}}
\newcommand{\dd}{\mathrm{d}}
\newcommand{\re}{\mathrm{Re\,}}

\newcommand{\ketbra}[1]{\ensuremath{\left|#1\right\rangle
\hspace{-3pt}\left\langle #1\right|}}
\setcounter{secnumdepth}{4}
\begin{document}
\title{On the equivalence of separability and extendability of quantum states}
\author{B. V. Rajarama Bhat}
\address{Theoretical Statistics and Mathematics Unit, Indian
Statistical Institute, Bengalore Centre, 8th Mile, Mysore Road
RVCE Post, Bangalore 560 059}
\email[B. V. Rajarama
Bhat]{\href{mailto:bhat@isibang.ac.in}{bhat@isibang.ac.in}}
\author{K. R. Parthasarathy}
\address{Theoretical Statistics and Mathematics Unit, Indian
Statistical Institute, Delhi Centre, 7 S J S Sansanwal Marg,
New Delhi 110 016, India}
\email[K R Parthasarathy]{\href{mailto:krp@isid.ac.in}{krp@isid.ac.in}}
\author{Ritabrata Sengupta}
\address{Theoretical Statistics and Mathematics Unit, Indian
Statistical Institute, Delhi Centre, 7 S J S Sansanwal Marg,
New Delhi 110 016, India}
\email[Ritabrata Sengupta]{\href{mailto:rb@isid.ac.in}{rb@isid.ac.in}}
\begin{abstract}
 
\par Motivated by the notions of $k$-extendability and complete
extendability of the state of a  finite level quantum system as
described by Doherty et al (Phys. Rev. A, 69:022308),
we introduce parallel definitions in the context of 
Gaussian states and using only properties of their
covariance matrices derive necessary and sufficient
conditions for their complete extendability. It turns out
that the complete extendability property is equivalent to
the separability property of a bipartite Gaussian state.

\par Following the proof of quantum de Finetti
theorem as outlined in Hudson and Moody (Z.
Wahrscheinlichkeitstheorie und Verw. Gebiete,
33(4):343--351), we show that separability is equivalent to
complete extendability for a state in a bipartite Hilbert
space where at least one of which is of dimension greater
than 2. This, in particular, extends the result of Fannes,
Lewis, and Verbeure (Lett. Math. Phys. 15(3): 255--260) to
the case of an infinite dimensional Hilbert space whose C*
algebra of all bounded operators is not separable.

\smallskip
\noindent \textbf{Keywords.} Gaussian state, exchangeable
Gaussian state, extendability, entanglement.  

\smallskip

\noindent \textbf{Mathematics Subject Classification
(2010):} 81P40, 81P99, 94A15.
\end{abstract}
\thanks{The authors thank Professor Ajit Iqbal Singh for
useful comments and suggestions. RS acknowledges financial
support from the National Board for Higher Mathematics,
Govt. of India. This work is supported by the \emph{Advances in
Non-Commutative Mathematics} (ANCM) Project funded by ISI.} 
\maketitle
%%%%%%%%%%%%%%%%%%%%%%%%%%%%%%%%%%%%%%%%%%%%%%%%%%%%%%%%%%%%%%%%
\section{Introduction}
\par One of the most important problems in quantum mechanics
as well as quantum information theory is to determine whether a
given bipartite state is separable or entangled \cite{NC}.
There are several methods in tackling this problem leading
to a long list of important publications. A detailed
discussion on this topic is available in the survey articles
by Horodecki et al \cite{RevModPhys.81.865}, and 
G{\"u}hne and T{\'o}th \cite{guth}. One such condition which
is both necessary and sufficient for separability in finite
dimensional product spaces is complete extendability
\cite{PhysRevA.69.022308}.

\begin{defin}\label{def:1}
Let $k \in \mathbb{N}$. A state $\rho \in
\mathcal{B}(\mathcal{H}_A \otimes \mathcal{H}_B)$ is said to
be $k$-extendable with respect to system $B$ if there is a
state $\tilde{\rho} \in \mathcal{B}(\mathcal{H}_A \otimes
\mathcal{H}_B^{\otimes k})$ which is invariant under any
permutation in $\mathcal{H}_B^{\otimes k}$ and $\rho =
\tr_{\mathcal{H}_B^{\otimes (k-1)}}\tilde{\rho}$, $k \ge 2$. 
\par A state $\rho \in \mathcal{B}(\mathcal{H}_A \otimes
\mathcal{H}_B)$ is said to be completely extendable if it
is $k$-extendable for all $k \in \mathbb{N}$. 
\end{defin}
The following theorem of Doherty, Parrilo, and
Spedalieri \cite{PhysRevA.69.022308} emphasizes the
importance of the notion of complete extendability.

\begin{customthm}{A}\cite{PhysRevA.69.022308} \label{th:a}
A bipartite state $\rho \in \mathcal{B}(\mathcal{H}_A \otimes
\mathcal{H}_B)$ is separable if and only if it is completely
extendable with respect to one of its subsystems.
\end{customthm}

\par It is fairly simple to see that separability implies complete
extendability. The proof of the converse depends
on an application of the quantum de Finetti theorem
\cite{MR0241992, MR0397421}, according to which any
exchangable state is, indeed, separable. The link between separability and
extendability has found applications in quantum information theory
\cite{MR2825510, MR3210848}. Here we study the same in
the context of quantum Gaussian states.

\par The importance of finite mode Gaussian states and their
covariance matrices in general quantum theory as well as
quantum information has been highlighted extensively in the
literature. A comprehensive survey of Gaussian states and
their properties can be found in the
book of Holevo \cite{MR2797301}. For their applications to 
quantum information theory the reader is referred to the
survey article by Weedbrook et al \cite{RevModPhys.84.621}
as well as Holevo's book \cite{ MR2986302}.  For our
reference we use \cite{arvindsurvey, MR2662722,
MR3070484} for Gaussian states and for notations in the
following sections we use \cite{krprb} and \cite{krprb2}.

\par If $\rho$ is a state of a quantum system and
$X_i,\,i=1,2$ are two real-valued observables, or
equivalently, self-adjoint operators with finite second
moments in the state $\rho$ then the covariance between $X_1$
and $X_2$ in the state $\rho$ is the scalar quantity
\[\tr\left(\frac{1}{2}(X_1 X_2 +X_2 X_1)\rho\right) - \left(\tr X_1
\rho\right) \cdot \left(\tr X_2\rho\right),\]
 which is denoted by
$\mathrm{Cov}_\rho (X_1,X_2)$. Suppose $q_1,p_1;\,
q_2,p_2;\, \cdots ;\, q_n,p_n$ are the position - momentum
pairs of observables of a quantum system with $n$ degrees of
freedom obeying the canonical  commutation relations. Then
we express 
\[(X_1,X_2,\cdots,X_{2n}) =
(q_1,-p_1,q_2,-p_2,\cdots,q_n,-p_n).\]
If $\rho$ is a state in which all the $X_j$'s have finite
second moments we write 
\begin{equation}\label{eq:1.1}
S_\rho = [[ \mathrm{Cov}_\rho(X_i, X_j)]], \quad i,j \in \{
1,2,\cdots, 2n\}.
\end{equation} 
We call $S_\rho$ the covariance matrix of the position
momentum observables. If we write 
\begin{equation}\label{eq:1.2}
J_{2n}= \begin{bmatrix}\begin{array}{rr}
0 & 1 \\ -1 & 0
\end{array} &&& \\
& \begin{array}{rr}
0 & 1 \\ -1 & 0
\end{array} &&\\ 
 &  & \ddots & \\
&&& \begin{array}{rr}
0 & 1 \\ -1 & 0
\end{array}
\end{bmatrix}
\end{equation}
or equivalently $\bigoplus_1^n \begin{bmatrix} 0 & 1 \\ -1 &
0 \end{bmatrix} $ for the $2n \times 2n$ block diagonal
matrix, the complete Heisenberg uncertainty relations for
all the position and momentum observables assume the form
of the following matrix inequality 
\begin{equation}\label{eq:1.3}
S_\rho +  \frac{\imath}{2} J_{2n} \geq 0.
\end{equation} 
Conversely, if $S$ is any real $2n \times 2n$ symmetric
matrix obeying the inequality $S +  \frac{\imath}{2}
J_{2n} \geq 0$, then there exists a state $\rho$ such that
$S$ is the covariance matrix $S_\rho$ of the observables
$q_1,-p_1;\,q_2,-p_2;\, \cdots ;\, q_n,-p_n$. In such a case $\rho$ can be
chosen to be a Gaussian state with mean zero. 
Recall \cite{MR2662722}, a state $\rho$ in $\Gamma(\mathcal{H})$ with $\mathcal{H}=
\mathbb{C}^n$ is an $n$-mode Gaussian state if its
Fourier transform $\hat{\rho}$ is given by 
\begin{equation}\label{eq:1.4}
\hat{\rho}(\bm{x} + \imath \bm{y}) = \exp
\left[-\imath \sqrt{2} (\bm{l}^T\bm{x} -
\bm{m}^T\bm{y}) - \begin{pmatrix} \bm{x}\\
\bm{y} \end{pmatrix}^T S  \begin{pmatrix} \bm{x}\\
\bm{y} \end{pmatrix} \right].
\end{equation}
for all $\bm{x},~\bm{y}\in \mathbb{R}^n$ where $\bm{l},~
\bm{m}$ are the momentum-position mean vectors and $S$ their
covariance matrix.

\par We organise the paper as follows. Motivated by the
concept of extendability for finite dimensional
systems, we define Gaussian extendability for Gaussian
states in \S \ref{Sec:2}. We study this extendability
problem entirely in terms of covariance matrices.  
In \S \ref{sec:3} we look at the
same problem in an abstract way.  We show that for
a bipartite state of arbitrary dimension, complete
extendability is equivalent to separability.
This proof directly follows from work of Hudson and Moody
\cite{MR0397421}. The problem of finding necessary and
sufficient conditions for $k$-extendability of states in
both Gaussian and non-Gaussian cases remains open.
%%%%%%%%%%%%%%%%%%%%%%%%%%%%%%%%%%%%%%%%%%%%%%%

\section{Gaussian extendability} \label{Sec:2}

\begin{defin}[Gaussian extendability]\label{def:2}
Let $k \in \mathbb{N}$. A Gaussian state $\rho_g$ in
$\Gamma(\mathbb{C}^m) \otimes \Gamma(\mathbb{C}^n)$ is said to
be Gaussian $k$-extendable with respect to the second system  if there is a
Gaussian state $\tilde{\rho_g}$ in  $\Gamma(\mathbb{C}^m) \otimes
\Gamma(\mathbb{C}^n)^{\otimes k}$  which is invariant under any
permutation in $\Gamma(\mathbb{C}^n)^{\otimes k}$ and $\rho_g =
\tr_{\Gamma(\mathbb{C}^n)^{\otimes (k-1)}}\tilde{\rho_g}$,
$k\ge 2$. 
\par A Gaussian state $\rho_g$ in $\Gamma(\mathbb{C}^m)
\otimes \Gamma(\mathbb{C}^n)$ is said to be Gaussian
completely extendable if it is Gaussian $k$-extendable
for every $k \in \mathbb{N}$.
\end{defin}
\begin{rem}
In this section we confine our attention to Gaussian states
only and so we use the terms $k$-extendability
and complete extendability to mean Gaussian $k$-extendability
and Gaussian complete extendability respectively, unless
stated otherwise. In the next section \S \ref{sec:3}, we use
extendability and complete extendability in its usual sense. 
\end{rem}

\par We shall use the following result.
\begin{customthm}{B}\label{cu:B}
Let 
\[X=\left[\begin{array}{c|c}
A & B\\ \hline 
B^\dag & C\end{array}\right]
\]
be a Hermitian block matrix with real or complex entries,
$A$ and $C$ being strictly positive
matrices of order $m \times m$ and $n\times n$ respectively. 
Then $X\ge 0$ if and only if 
\[A\ge B C^{-1} B^\dag.\]
\end{customthm}
\begin{proof}
For a proof, see Theorem 1.3.3 in the book of Bhatia
\cite{bhatia1}.
\end{proof}
\par Entanglement property of a Gaussian state depends only on
its covariance matrix. Hence without loss of generality, we
can confine our attention to the Gaussian states with
mean zero. Thus an $(m+n)$-mode mean zero Gaussian state in
$\Gamma(\mathbb{C}^m) \otimes \Gamma(\mathbb{C}^n)$ is
uniquely determined by a $2(m+n) \times 2(m+n)$ covariance
matrix 
\[S = \begin{bmatrix} A & B \\ B^T & C \end{bmatrix}.\]
Here $A$ and $C$ are covariance matrices of the $m$ and $n$-mode
marginal states respectively. 

\par If $\rho(\bm{0},\bm{0};S)$, written in short as
$\rho(S)$ in $\Gamma(\mathbb{C}^m) \otimes
\Gamma(\mathbb{C}^n)$ is $k$-extendable with respect to the
second system, then there exists a real matrix $\theta_k$ of
order $2n \times 2n$ such that the extended matrix
\begin{equation}\label{eq:1.5}
S_k = \left[ \begin{array}{c|cccc}
A & B & B & \cdots & B \\\hline
B^T & C & \theta_k & \cdots & \theta_k \\
B^T & \theta_k^T & C & \cdots & \theta_k\\
\vdots & \vdots & \vdots & \ddots & \vdots\\
B^T & \theta_k^T & \theta_k^T & \cdots & C
\end{array} \right]
\end{equation} 
is the covariance matrix of a Gaussian state in
$\Gamma(\mathbb{C}^m) \otimes \Gamma(\mathbb{C}^n)^{\otimes
k}$. Then it satisfies inequality (\ref{eq:1.3}) in the form
\begin{equation}\label{eq:1.6}
S_k + \frac{\imath}{2} J_{2(m+kn)} \ge 0.
\end{equation}
Now we observe that $\theta_k$ can be chosen independent of
$k$. To prove this we need the following theorem which will
also be used later in this paper. 
\begin{customthm}{C}\label{cu:C}
Let $A$ and $B$ be positive. Then the matrix 
$\left[\begin{array}{c|c}
A & B\\ \hline 
B^\dag & C\end{array}\right]$
is positive if and only if $B=A^{\frac{1}{2}} K
C^{\frac{1}{2}}$ for some contraction $K$.
\end{customthm}
\begin{proof}
For a proof, see Proposition 1.3.2 in the book of Bhatia
\cite{bhatia1}.
\end{proof}
The reason follows form the fact that the marginal
covariance matrix $\begin{bmatrix} C & \theta_k \\
\theta_k^T & C\end{bmatrix} \ge 0$. Using the above Theorem
\ref{cu:C}, this is equivalent to the existence of a
contraction $K$ with $\|K\|\le1$ such that $\theta_k
= C^{\frac{1}{2}} K C^{\frac{1}{2}}$. Hence $\|\theta_k\|
\le \|C\|$. Since for a given state, $C$ is fixed, we have
$\theta_k$ bounded. Hence, in the set of all $\theta_k$'s
there is a convergent subsequence which converges to a
$\theta$ and can replace $\theta_k$ in (\ref{eq:1.5}) by
$\theta$. The extension matrix for each $k=1,2,3,\cdots$
will look like 
 \begin{equation}\label{eq:1.5a}
S_k = \left[ \begin{array}{c|cccc}
A & B & B & \cdots & B \\\hline
B^T & C & \theta_k & \cdots & \theta_k \\
B^T & \theta_k^T & C & \cdots & \theta_k\\
\vdots & \vdots & \vdots & \ddots & \vdots\\
B^T & \theta_k^T & \theta_k^T & \cdots & C
\end{array} \right]
\end{equation}  
Hence by Definition \ref{def:1}, $\rho(S)$ is completely
extendable if the inequality (\ref{eq:1.6}) holds for every
$k=1,\,2,\,\cdots.$
\par Let us denote the marginal covariance matrix
corresponding to  $\Gamma(\mathbb{C}^n)^{\otimes k}$ by 
\[\Sigma_k(C,\theta) = \begin{bmatrix}
C & \theta & \cdots & \theta \\
\theta^T & C & \cdots & \theta\\
 \vdots & \vdots & \ddots & \vdots\\
 \theta^T & \theta^T & \cdots & C
\end{bmatrix}.\]
If $\rho$ is completely extendable, $S_k$ is a
covariance matrix for each $k$, and hence
$\Sigma_k(C,\theta)$ is a covariance matrix for each $k$ as
well. Using Theorem 1 of \cite{krprb2} (see also
\cite{MR2492582}), such a pair
$(C,\theta)$ defines a covariance matrix $\Sigma_k(C,\theta)$
for each $k=1,2,3,\cdots$ if and only if 
\begin{enumerate}[(i)]
\item $\theta$ is a real symmetric positive semidefinite
matrix, and 
\item $C - \theta +\frac{\imath}{2}J_{2n} \ge 0$.
\end{enumerate}
In particular, $S_k$ is of the form 
\begin{equation}\label{eq:1.7}
 S_k = \left[ \begin{array}{c|cccc}
A & B & B & \cdots & B \\\hline
B^T & C & \theta & \cdots & \theta \\
B^T & \theta & C & \cdots & \theta\\
\vdots & \vdots & \vdots & \ddots & \vdots\\
B^T & \theta & \theta & \cdots & C
\end{array} \right],
\end{equation} 
where $\theta$ is a real positive semidefinite matrix. 

\par Our first theorem gives a necessary and sufficient
condition for complete extendability of Gaussian states. 

\begin{lemma}\label{le1}
Let $\rho$ be a bipartite Gaussian state in  $\Gamma(\mathbb{C}^m)
\otimes \Gamma(\mathbb{C}^n)$ with no pure
marginal state in $\Gamma(\mathbb{C}^m)$ as well as
$\Gamma(\mathbb{C}^n)$. Let $S =
\begin{bmatrix} A & B \\ B^T & C \end{bmatrix}$ be the
covariance matrix of $\rho$, where $A$
and $C$ are marginal covariance matrices of the first and second
system respectively. Then $\rho$ is completely extendable with respect
to the second system if and only if there exists a real positive
matrix $\theta$ such that 
\begin{equation}\label{eq:x} 
C+\frac{\imath}{2} J_{2n} \ge \theta \ge B^T \left( A + \frac{\imath}{2} J_{2m}
\right)^{-1} B.
\end{equation} 
\end{lemma}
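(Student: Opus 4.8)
The plan is to exploit the reduction already in place: by \eqref{eq:1.6} together with the form \eqref{eq:1.7}, the state $\rho(S)$ is completely extendable with respect to the second system precisely when there is a single real positive semidefinite matrix $\theta$ for which $M_k := S_k + \frac{\imath}{2}J_{2(m+kn)} \ge 0$ for every $k \ge 1$. Put $\tilde{A} = A + \frac{\imath}{2}J_{2m}$ and $\tilde{C} = C + \frac{\imath}{2}J_{2n}$. Since the first marginal carries no pure Gaussian state, $\tilde{A}$ is \emph{strictly} positive and hence invertible, which is exactly what gives meaning to the right-hand side of \eqref{eq:x}. My strategy is to test $M_k \ge 0$ by a Schur complement with pivot $\tilde{A}$ and then to examine the resulting family of inequalities as $k \to \infty$.

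First I would write out the blocks of $M_k$. Because $J_{2(m+kn)}$ is block diagonal, $M_k$ has top-left block $\tilde{A}$, first block row $[\tilde{A}, B, \cdots, B]$, every diagonal block in the second system equal to $\tilde{C}$, and every off-diagonal second-system block equal to the real matrix $\theta$. Applying the Schur complement criterion (cf. Theorem~\ref{cu:B}) with the invertible pivot $\tilde{A} > 0$, one has $M_k \ge 0$ if and only if the lower-right $2kn \times 2kn$ block $\Sigma'_k$ satisfies $\Sigma'_k \ge \mathbf{B}^{\dag}\tilde{A}^{-1}\mathbf{B}$, where $\mathbf{B} = [B, \cdots, B]$ has $k$ copies of $B$. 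Using that $B$ is real, $\mathbf{B}^{\dag}\tilde{A}^{-1}\mathbf{B}$ is the $k \times k$ block matrix every one of whose blocks equals $D := B^T\tilde{A}^{-1}B$, while $\Sigma'_k$ carries $\tilde{C}$ on the diagonal and $\theta$ off it. Writing $I_k$ for the identity and $E_k$ for the $k \times k$ all-ones matrix, the condition becomes
\[
I_k \otimes (\tilde{C} - \theta) + E_k \otimes (\theta - D) \ge 0 .
\]

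Next I would diagonalize the scalar factor $E_k$, which has eigenvalue $k$ once and eigenvalue $0$ with multiplicity $k-1$. In an eigenbasis of $E_k$ the inequality splits into two matrix conditions, each required for all $k$: from the $0$-eigenspace, (a)~$\tilde{C} - \theta \ge 0$ (which recovers condition~(ii) for the second-system marginal), and from the top eigenvalue, (b)~$(\tilde{C} - \theta) + k(\theta - D) \ge 0$. Here $D \ge 0$ is Hermitian because $\tilde{A}^{-1} > 0$ and $B$ is real, so every matrix in sight is Hermitian and the eigenvalue bookkeeping is valid.

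The crux, and the step I expect to be the main obstacle, is the behaviour of (b) as $k \to \infty$. For the ``if'' direction, assume $\tilde{C} \ge \theta \ge D$: then (a) is immediate, both summands in (b) are positive semidefinite, and so $M_k \ge 0$ for every $k$. For the converse, suppose $M_k \ge 0$ for all $k$; then (a) gives $\tilde{C} \ge \theta$, and if some unit vector $v$ had $v^{\dag}(\theta - D)v < 0$ then $v^{\dag}\bigl[(\tilde{C}-\theta) + k(\theta - D)\bigr]v \to -\infty$, violating (b) for large $k$. Hence $\theta \ge D$. The two inequalities together are exactly \eqref{eq:x}; moreover $\theta \ge D \ge 0$ forces $\theta \ge 0$ automatically, and its reality is inherited from the real covariance matrix $S_k$, so the $\theta$ produced is indeed a real positive matrix. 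This establishes the equivalence.
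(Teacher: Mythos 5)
Your proposal is correct and follows essentially the same route as the paper: a Schur complement with pivot $A+\frac{\imath}{2}J_{2m}$, followed by splitting the resulting block inequality along the spectral projections of the all-ones structure (your $E_k$-eigenspaces are exactly the paper's $\ketbra{\psi_k}$ and $I_k-\ketbra{\psi_k}$), and then letting $k\to\infty$ in the top-eigenvalue condition $\bigl(C-\theta+\frac{\imath}{2}J_{2n}\bigr)+k(\theta-D)\ge 0$ to extract $\theta\ge D$. The only cosmetic difference is that the paper obtains $C+\frac{\imath}{2}J_{2n}\ge\theta$ from the prior characterization of when $\Sigma_k(C,\theta)$ is a covariance matrix for all $k$, whereas you read it off the $0$-eigenspace of $E_k$; both are valid.
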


\begin{proof}
\par Without loss of generality, we may assume that $A$ and
$C$ are written in their Williamson normal forms. Since no
pure state is a marginal of $\rho$, $\frac{1}{2}I_2$ is not
a sub-matrix of ether $A$ or $C$. This implies $\left(A +
\frac{\imath}{2} J_{2m}\right)$ and $\left(C +
\frac{\imath}{2} J_{2n}\right)$ are invertible, and hence we
can apply Theorem \ref{cu:B}, when $A$ and $C$ are replaced
respectively by $\left(A +
\frac{\imath}{2} J_{2m}\right)$ and $\left(C +
\frac{\imath}{2} J_{2n}\right)$.  
Thus, 
\[ C+\frac{\imath}{2} J_{2n} \ge B^T \left( A +
\frac{\imath}{2} J_{2m} \right)^{-1} B.\]

\par The necessity of the left part of inequality
\ref{eq:x} is already contained in the
discussion above (\ref{eq:1.7}). Hence, all we need to 
prove is the right part the same inequality starting
from (\ref{eq:1.6}).
 
\par Setting $\ket{\psi_k}
=\frac{1}{\sqrt{k}}[1,1,\cdots,1]^T\in \mathbb{C}^k $ and 
$\sqrt{k}\mathcal{B}_k=B \otimes \bra{\psi_k}$, the left
hand side of (\ref{eq:1.6}) can be expressed as
\[\begin{bmatrix}
A + \frac{\imath}{2} J_{2m} & \sqrt{k} \mathcal{B}_k \\
&\\
\sqrt{k} \mathcal{B}_k^T & \Sigma_k + \frac{\imath}{2}
J_{2nk}
\end{bmatrix} .\]
By Theorem \ref{cu:B} this matrix is positive if and only if 
\[\Sigma_k + \frac{\imath}{2} J_{2nk} \ge k \mathcal{B}_k^T
\left(A + \frac{\imath}{2} J_{2m}\right)^{-1}
\mathcal{B}_k.\]
By elementary algebra, this is equivalent to
\begin{multline*}
\left( C - \theta + \frac{\imath}{2} J_{2n}
\right)  \otimes (I_k - \ketbra{\psi_k}) + \left ( C +
\overline{k-1} \theta + \frac{\imath}{2} J_{2n} \right)
\otimes \ketbra{\psi_k}\\
 \ge k B^T \left(A
+\frac{\imath}{2}J_{2m} \right)^{-1} B \otimes
\ketbra{\psi_k} .
\end{multline*}
Since $\ketbra{\psi_k}$ and $I_k - \ketbra{\psi_k}$ are
mutually orthogonal projections, it follows that the
inequality above is equivalent to 
\begin{eqnarray*}
&& \left ( C + \overline{k-1} \theta + \frac{\imath}{2} J_{2n} \right)
 \ge k B^T \left(A
+\frac{\imath}{2}J_{2m} \right)^{-1} B,
\end{eqnarray*}
which can be rewritten as 
\begin{equation}\label{eq:a1}
\frac{1}{k}\left ( C - \theta + \frac{\imath}{2} J_{2n}
\right) +\theta
 \ge  B^T \left(A +\frac{\imath}{2}J_{2m} \right)^{-1} B,
\quad \text{ for every } k\in \mathbb{N}.
\end{equation}
Since $\left ( C - \theta +
\frac{\imath}{2} J_{2n} \right)$ is positive and the left hand side
decreases monotonically to $\theta$ as $k \rightarrow
\infty$, it follows that (\ref{eq:a1}) is equivalent to 
\begin{equation*}%\label{eq:1.8}
\theta \ge B^T \left(A +\frac{\imath}{2}J_{2m} \right)^{-1}
B.
\end{equation*}
\end{proof}  
\par We now consider the case when the Gaussian state
$\rho_g$ admits a pure marginal state. 
%\par We make the following observation for the case when a
%pure state can be a marginal state of Gaussian state
%$\rho_g$ and extend Lemma \ref{th:1} for any bipartite
%Gaussian states. 
\begin{prop}\label{pro1}
If $X=\begin{bmatrix} A & B \\ B^T & \frac{1}{2}I_{2s}
\end{bmatrix}$ is a Gaussian covariance matrix, then $B=0$.
\end{prop}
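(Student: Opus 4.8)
The plan is to feed the hypothesis that $X$ is a Gaussian covariance matrix into the uncertainty inequality (\ref{eq:1.3}) and to exploit the fact that the marginal $\tfrac12 I_{2s}$ saturates it, i.e. corresponds to a pure ($s$-mode vacuum) state. Writing the symplectic form in block form $J_{2(m+s)} = J_{2m}\oplus J_{2s}$ (system $A$ occupying the first $2m$ coordinates), the inequality $X + \tfrac{\imath}{2}J_{2(m+s)}\ge 0$ reads
\[
\begin{bmatrix} A + \tfrac{\imath}{2}J_{2m} & B \\ B^T & \tfrac12 I_{2s} + \tfrac{\imath}{2}J_{2s} \end{bmatrix} \ge 0,
\]
the off-diagonal block being exactly $B$ because $J$ has no cross terms between the two systems. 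The first step is to analyse the lower-right block $D := \tfrac12 I_{2s} + \tfrac{\imath}{2}J_{2s}$. Using $J_{2s}^2 = -I_{2s}$ one checks $D^2 = D$ and $D^\dagger = D$, so $D$ is an orthogonal projection; concretely its range is the $(-\imath)$-eigenspace of $J_{2s}$. The crucial point is that a pure marginal makes $D$ singular, a projection rather than a strictly positive matrix.

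Next I would extract the range condition forced by positivity. Since $A + \tfrac{\imath}{2}J_{2m}\ge 0$ (it is a principal submatrix of the displayed positive matrix) and $D\ge 0$, Theorem \ref{cu:C} gives $B = (A+\tfrac{\imath}{2}J_{2m})^{1/2} K D^{1/2}$ for some contraction $K$; as $D$ is a projection, $D^{1/2}=D$, and taking adjoints shows $\operatorname{ran}(B^T) = \operatorname{ran}(B^\dagger)\subseteq \operatorname{ran}(D)$. Alternatively one can bypass Theorem \ref{cu:C} and argue directly that positivity of the block matrix forces $\ker D\subseteq \ker B$, hence $\operatorname{ran}(B^T)\subseteq \operatorname{ran}(D)$, by testing the quadratic form on the vectors $\binom{-\lambda Bv}{v}$ with $Dv=0$ and noting that for small $\lambda>0$ the value is negative unless $Bv=0$.

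The decisive observation is then purely linear-algebraic: the columns of $B^T$ are real vectors, whereas $\operatorname{ran}(D)$ is the $(-\imath)$-eigenspace of $J_{2s}$, which contains no nonzero real vector — if a real $w$ satisfied $J_{2s}w = -\imath w$, the left side would be real and the right side purely imaginary, forcing $w=0$. Hence every column of $B^T$ vanishes and $B=0$. I expect the only genuine obstacle to be recognising that the pure marginal collapses the usual Schur-complement reasoning (which would need an invertible lower-right block) into a mere range constraint; once that is in place, the clash between the reality of $B$ and the honestly complex eigenspace $\operatorname{ran}(D)$ finishes the argument. The remaining care is bookkeeping: the block decomposition of $J$, and keeping track of $B^T$ versus $B^\dagger$, which coincide here precisely because $B$ is real.
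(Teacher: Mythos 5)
Your proposal is correct and follows essentially the same route as the paper: both invoke Theorem \ref{cu:C} to factor $B$ through the projection $\tfrac12(I_{2s}+\imath J_{2s})$, and both conclude by observing that the resulting constraint (the rows of $B$ lying in the $(-\imath)$-eigenspace of $J_{2s}$, equivalently $B(I_{2s}-\imath J_{2s})=0$) is incompatible with $B$ being real unless $B=0$. The only difference is cosmetic phrasing of that last step, plus your optional quadratic-form argument that bypasses Theorem \ref{cu:C}, which is also valid.
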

\begin{proof}
Let $C=\frac{1}{2}(I_{2s} + \imath J_{2s})$. Then $C$ is a
projection with $C^{\frac{1}{2}} = C$. It follows that $C(I_{2s} -
\imath J_{2s})=0$. Since $X$ is a Gaussian matrix, 
\[X +\frac{\imath}{2} J_{2(n+s)} =  \begin{bmatrix} A+ \frac{\imath}{2} J_{2n}& B \\
B^T & \frac{1}{2}(I_{2s} + \imath J_{2s})\end{bmatrix}\ge 0,\] 
there is a contraction
$D$ such that $B = \left( A +\frac{\imath}{2} J_{2n}
\right)^{\frac{1}{2}} D \frac{1}{2}(I_{2s} + \imath
J_{2s})^{\frac{1}{2}} = \left( A +\frac{\imath}{2} J_{2n}
\right)^{\frac{1}{2}} D C$, where $\|D\| \le
1$. Then $B (I_{2s} - \imath J_{2s}) =0$, and hence $B =
\imath \re ( B - B J_{2s})=0$.% Since both $B$ and $BJ_{2s}$ have only real
%entries, we have $B=0$.
\end{proof}

\begin{theorem}\label{th:1}
Let $\rho$ be a bipartite Gaussian state in $\Gamma(\mathbb{C}^m)
\otimes \Gamma(\mathbb{C}^n)$ with covariance matrix $S =
\begin{bmatrix} A & B \\ B^T & C \end{bmatrix}$, where $A$
and $C$ are marginal covariance matrices of the first and second
system respectively. Then $\rho$ is completely extendable with respect
to the second system if and only if there exists a real positive
matrix $\theta$ such that 
\begin{equation}\label{eq:xx} 
C+\frac{\imath}{2} J_{2n} \ge \theta \ge B^T \left( A + \frac{\imath}{2} J_{2m}
\right)^{-} B,
\end{equation} 
where $\left( A + \frac{\imath}{2} J_{2m}\right)^{-}$ is the
Moore-Penrose inverse of $A + \frac{\imath}{2} J_{2m}$.
\end{theorem}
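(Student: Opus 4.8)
The plan is to reduce Theorem~\ref{th:1} to Lemma~\ref{le1} by handling exactly the case that the hypothesis of the lemma excludes, namely when $\rho$ admits a pure marginal state in $\Gamma(\mathbb{C}^m)$ or $\Gamma(\mathbb{C}^n)$. In Williamson normal form a pure marginal manifests itself as a block $\tfrac{1}{2}I_2$ appearing along the diagonal of $A$ or of $C$; such a block is precisely where $A+\tfrac{\imath}{2}J_{2m}$ (respectively $C+\tfrac{\imath}{2}J_{2n}$) fails to be invertible, which is why in the general statement the ordinary inverse in \eqref{eq:x} must be replaced by the Moore--Penrose inverse $\left(A+\tfrac{\imath}{2}J_{2m}\right)^{-}$ in \eqref{eq:xx}. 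So the whole point is to show that the degenerate directions contribute nothing and can be projected away, after which the nondegenerate part is covered verbatim by Lemma~\ref{le1}.

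First I would put $A$ and $C$ in Williamson normal form and split off the symplectic eigenvalue-$\tfrac12$ blocks. Applying Proposition~\ref{pro1} to each such $\tfrac{1}{2}I_2$ summand of $A$ shows that the corresponding rows of $B$ vanish; likewise, a pure marginal on the second system forces the corresponding rows of $B^T$ to vanish. Thus $B$ is supported entirely on the nondegenerate (symplectic eigenvalue $>\tfrac12$) parts of both systems. Concretely, writing $A+\tfrac{\imath}{2}J_{2m}=A_0\oplus 0$ where $A_0$ is the strictly positive piece coming from eigenvalues exceeding $\tfrac12$, Proposition~\ref{pro1} tells us that $B$ has no component in the kernel of $A+\tfrac{\imath}{2}J_{2m}$. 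Consequently $B^T\left(A+\tfrac{\imath}{2}J_{2m}\right)^{-}B=B^T A_0^{-1}B$, so the Moore--Penrose inverse acts on $B$ exactly as a genuine inverse would on the reduced space; the expression in \eqref{eq:xx} is well defined and finite.

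With this reduction in hand, I would run the same argument as in Lemma~\ref{le1}: form the extended covariance matrix $S_k$ in the form \eqref{eq:1.7}, invoke \eqref{eq:1.6} together with Theorem~\ref{cu:B}, and observe that the Schur-complement computation goes through with $\left(A+\tfrac{\imath}{2}J_{2m}\right)^{-1}$ replaced by $\left(A+\tfrac{\imath}{2}J_{2m}\right)^{-}$, since $B$ lives in the range on which the two agree. The monotone passage to the limit in \eqref{eq:a1} is unchanged and again yields $\theta\ge B^T\left(A+\tfrac{\imath}{2}J_{2m}\right)^{-}B$, while the left inequality $C+\tfrac{\imath}{2}J_{2n}\ge\theta$ comes from positivity of $\Sigma_k(C,\theta)$ exactly as before.

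The main obstacle, and the step deserving the most care, is justifying that Theorem~\ref{cu:B} may still be applied when $A+\tfrac{\imath}{2}J_{2m}$ is only positive semidefinite rather than strictly positive. The clean way to do this is to restrict the entire block inequality to the orthogonal complement of $\ker\!\left(A+\tfrac{\imath}{2}J_{2m}\right)$: because $B$ annihilates that kernel (by Proposition~\ref{pro1}), the full $2(m+kn)\times 2(m+kn)$ positivity in \eqref{eq:1.6} decouples into a trivially satisfied part on the kernel and a genuinely positive-definite Schur-complement condition on the orthogonal complement, where Theorem~\ref{cu:B} applies literally. One must check that this restriction is compatible with the symplectic form $J$, i.e.\ that the kernel directions do not couple the two systems through $J$ either; this follows from the block-diagonal structure of $J_{2(m+kn)}$ and the Williamson normalization. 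Once that compatibility is verified, the remainder is identical to the proof of Lemma~\ref{le1}.
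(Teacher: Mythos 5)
Your proposal is correct in substance and shares the paper's key ingredients --- Williamson normal form, Proposition~\ref{pro1} to show $B$ is supported on the non-pure blocks, and the observation that $B^T\left(A+\tfrac{\imath}{2}J_{2m}\right)^{-}B$ reduces to an honest inverse on the non-degenerate part --- but it completes the reduction differently. The paper splits off the pure blocks, considers the \emph{core} marginal state with covariance matrix $\left[\begin{smallmatrix} A' & B' \\ B'^T & C'\end{smallmatrix}\right]$, applies Lemma~\ref{le1} to it verbatim, and then pads the resulting $\theta'$ with zeros to get $\theta=\theta'\oplus \bm{0}$. You instead re-run the Schur-complement argument of Lemma~\ref{le1} on the full matrix, using the generalized Schur-complement criterion (positivity of the block matrix requires the range condition $\ker\left(A+\tfrac{\imath}{2}J_{2m}\right)\subseteq\ker B^T$, which Proposition~\ref{pro1} supplies, plus the Moore--Penrose inequality). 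Your route is arguably more self-contained, since the paper leaves implicit why complete extendability of $\rho$ is equivalent to that of its core marginal; the paper's route avoids having to restate Theorem~\ref{cu:B} in its degenerate form. One small correction: $A+\tfrac{\imath}{2}J_{2m}$ is \emph{not} $A_0\oplus 0$ on the pure blocks --- each block $\tfrac{1}{2}I_2$ contributes $\tfrac{1}{2}(I_2+\imath J_2)$, a rank-one orthogonal projection, not the zero matrix (its Moore--Penrose inverse is itself, as the paper's displayed computation shows). This does not damage your argument, because $B$ vanishes identically on those blocks, so the contribution to $B^T\left(A+\tfrac{\imath}{2}J_{2m}\right)^{-}B$ is zero either way; but the kernel you must quotient out is only the one-dimensional null direction inside each pure block, not the whole block.
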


\begin{proof}
Since the case where both $A + \frac{\imath}{2} J_{2m}$ and $C +
\frac{\imath}{2} J_{2n}$ are
invertible has already been dealt with in Lemma \ref{le1}, we only
need to prove in the case when $\rho$ admits pure marginal
states.
\par Without loss of generality let us assume that $A$ and $C$
are written in their Williamson normal forms. Let $A =
\left( \oplus_1^k \kappa_j I_2\right)\bigoplus
\left(\oplus_{k+1}^m \frac{1}{2} I_2\right) = A' \bigoplus
\frac{1}{2} I_{2(m-k)}$ and $C =
\left( \oplus_1^s \mu_l I_2\right)\bigoplus
\left(\oplus_{s+1}^n \frac{1}{2} I_2\right) = C' \bigoplus
\frac{1}{2} I_{2(n-s)}$, where
$\kappa_j,\,\mu_l>\frac{1}{2}$ for every $j,\,l$. By Proposition \ref{pro1},
$B$ has the form 
\[B=\left[\begin{array}{c|c}B' & \\ \hline & \end{array}\right],\]
where $B'$ is a real matrix of order $2k \times 2s$ and rest
of the entries are zero matrices of appropriate order.
\par Consider the marginal Gaussian state, whose covariance
matrix is 
\begin{equation}\label{e:s}
\begin{bmatrix}A' & B' \\ B'^T & C' \end{bmatrix}.
\end{equation}
Since $A'$ and $C'$ do not have any principal sub-matrix of
the form $\frac{1}{2}I_2$, by Lemma \ref{le1}, the marginal
Gaussian state with covariance matrix given by (\ref{e:s}) is
completely extendable if and only if there is a real $2s
\times 2s$ matrix $\theta'$ such that 
\[C'+ \frac{\imath}{2} J_{2s} \ge \theta' \ge B'^T \left( A'
+ \frac{\imath}{2} J_{2k} \right)^{-1} B'.\]
Observe that 
\begin{align*}
B^T\left(A +\frac{\imath}{2}J_{2n} \right)^- B &=
\left[\begin{array}{c|c}B'^T & \\ \hline &
\end{array}\right] \left(
\left(A'+\frac{\imath}{2}J_{2k}\right)^{-1} 
\bigoplus \left( \bigoplus_{(m-k)\text{-copies}} \begin{bmatrix} \frac{1}{2}
& \frac{\imath}{2} \\ -\frac{\imath}{2} & \frac{1}{2}
\end{bmatrix} \right) \right)\left[\begin{array}{c|c}B' & \\
\hline & \end{array}\right] \\
&= \left[\begin{array}{c|cc}  B'^T \left( A' + \frac{\imath}{2} J_{2k}
\right)^{-1} B' & \bm{0}_{2s \times 2(n-s)}\\\hline
\bm{0}_{2(n-s) \times 2s}&\bm{0}_{2(n-s) \times 2(n-s)} \end{array}\right], 
\end{align*}
$\bm{0}$ with indices denoting zero matrices. 
%where the empty space of the matrix consists of zeros and 
%$\bm{0}_{2(n-s)}$ is null matrix of order $2(n-s) \times
%2(n-s)$.
Set $\theta = \theta' \bigoplus \bm{0}_{2(n-s) \times 2(n-s)}$. It is easy to see
that such a real matrix $\theta$ satisfies the conditions of inequality
(\ref{eq:xx}). Hence the theorem is proved. 
\end{proof}
\begin{theorem}\label{th:2}
Any separable Gaussian state in a bipartite system is
completely extendable.
\end{theorem}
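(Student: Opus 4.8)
The plan is to combine the characterisation of complete extendability obtained in Theorem \ref{th:1} with the covariance-matrix criterion for separability of Gaussian states. Recall that, by the criterion of Werner and Wolf, a bipartite Gaussian state with covariance matrix $S=\begin{bmatrix} A & B \\ B^T & C\end{bmatrix}$ is separable with respect to the two systems if and only if there exist real covariance matrices $S_A$ and $S_B$ satisfying $S_A+\frac{\imath}{2}J_{2m}\ge 0$ and $S_B+\frac{\imath}{2}J_{2n}\ge 0$ such that $S\ge S_A\oplus S_B$. Given such a decomposition I would simply exhibit the matrix $\theta$ required by Theorem \ref{th:1}, taking $\theta:=C-S_B$. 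The block inequality $S\ge S_A\oplus S_B$ says precisely that $\begin{bmatrix} A-S_A & B \\ B^T & C-S_B\end{bmatrix}\ge 0$, so in particular $\theta=C-S_B$ is a real symmetric positive semidefinite matrix, which is the first requirement in (\ref{eq:xx}).

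Next I would verify the two inequalities in (\ref{eq:xx}). The left-hand inequality $C+\frac{\imath}{2}J_{2n}\ge\theta$ is equivalent to $S_B+\frac{\imath}{2}J_{2n}\ge 0$, which holds because $S_B$ is a legitimate single-system covariance matrix. For the right-hand inequality I would apply the Schur-complement criterion (Theorem \ref{cu:B}) to $\begin{bmatrix} A-S_A & B \\ B^T & C-S_B\end{bmatrix}\ge 0$ to obtain $\theta=C-S_B\ge B^T(A-S_A)^{-1}B$. Since $S_A+\frac{\imath}{2}J_{2m}\ge0$, we have $A+\frac{\imath}{2}J_{2m}=(A-S_A)+(S_A+\frac{\imath}{2}J_{2m})\ge A-S_A$, and the operator monotonicity of the map $T\mapsto -T^{-1}$ on strictly positive matrices gives $(A-S_A)^{-1}\ge(A+\frac{\imath}{2}J_{2m})^{-1}$. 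Conjugating by $B$ then yields $B^T(A-S_A)^{-1}B\ge B^T(A+\frac{\imath}{2}J_{2m})^{-1}B$, whence $\theta\ge B^T(A+\frac{\imath}{2}J_{2m})^{-1}B$. Theorem \ref{th:1} then declares $\rho$ completely extendable.

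The step requiring the most care is the passage from strictly positive to merely positive semidefinite matrices, that is, the case in which $\rho$ admits a pure marginal and $A+\frac{\imath}{2}J_{2m}$ (or $C+\frac{\imath}{2}J_{2n}$) is singular, so that genuine inverses must be replaced by Moore--Penrose inverses. The naive operator-monotonicity inequality for pseudo-inverses can fail without a range condition, so I would not argue with pseudo-inverses directly. Instead I would reduce to the non-degenerate case exactly as in the proof of Theorem \ref{th:1}: writing $A$ and $C$ in Williamson normal form and peeling off the $\frac{1}{2}I_2$ blocks, Proposition \ref{pro1} forces the corresponding blocks of $B$ to vanish, so $B$ is supported on the strictly positive part where $A-S_A$ and $A+\frac{\imath}{2}J_{2m}$ are invertible. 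On that sub-block the argument of the previous paragraph applies verbatim, and setting the remaining blocks of $\theta$ to zero produces a real positive $\theta$ satisfying (\ref{eq:xx}) in full. This block reduction, rather than any hard inequality, is the only delicate point.
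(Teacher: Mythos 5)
Your starting point---the Werner--Wolf decomposition $S\ge S_A\oplus S_B$ followed by an explicit $\theta$ fed into Theorem \ref{th:1}---matches the paper's, and your choice $\theta=C-S_B$ is precisely the paper's extension matrix $\varphi=tG+(1-t)FE^{-}F^{T}$ at $t=1$ (with $E=A-S_A$, $F=B$, $G=C-S_B$). The divergence, and the genuine gap, is in your verification of the right-hand inequality of (\ref{eq:xx}). You apply Theorem \ref{cu:B} to $\left[\begin{smallmatrix} A-S_A & B\\ B^{T} & C-S_B\end{smallmatrix}\right]\ge 0$ and then use anti-monotonicity of the inverse on $A-S_A\le A+\frac{\imath}{2}J_{2m}$; both steps need $A-S_A$ to be strictly positive. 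But $A-S_A$ can be singular even when $\rho$ has no pure marginal and $A+\frac{\imath}{2}J_{2m}$ is invertible: for a product state Werner--Wolf permits $S_A=A$, so $A-S_A=0$, and more generally the decomposition can be tight in some directions. Your reduction in the last paragraph only addresses singularity of $A+\frac{\imath}{2}J_{2m}$ itself (pure marginals, via Williamson form and Proposition \ref{pro1}); it says nothing about where $A-S_A$ is invertible, so it does not close this hole.

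The hole is patchable, in two ways. First, for $v$ in the range of a positive semidefinite $M$ one has $v^{\dagger}M^{-}v=\min\{t:\left[\begin{smallmatrix} M & v\\ v^{\dagger} & t\end{smallmatrix}\right]\ge 0\}$, a quantity that is decreasing in $M$; since positivity of the block matrix forces the column space of $B$ into the range of $A-S_A$, this yields $B^{T}(A-S_A)^{-}B\ge B^{T}\left(A+\frac{\imath}{2}J_{2m}\right)^{-1}B$ with Moore--Penrose inverses and no invertibility hypothesis. Second---and this is what the paper actually does---one can bypass the comparison of inverses altogether: write the candidate $k$-extension $S_k$ as the sum of the manifestly Gaussian covariance matrix $X\oplus(\oplus_k Y)$ and the positive semidefinite all-$k$ extension of $\left[\begin{smallmatrix} E & F^{T}\\ F & G\end{smallmatrix}\right]$, so that $S_k+\frac{\imath}{2}J\ge 0$ follows by addition, with no appeal to Theorem \ref{th:1} and no pseudo-inverse estimates. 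That additive decomposition is the cleaner route and is exactly what makes the degenerate cases painless.
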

\begin{proof}
Let $\rho$ be an $(m+n)$ mode Gaussian state with covariance
matrix $\begin{bmatrix} A & B^T \\ B & C \end{bmatrix}$
with $A$ and $C$ being the $m$ and $n$-mode marginal
covariance matrices.  By a theorem of Werner and
Wolf \cite{PhysRevLett.86.3658}, $\rho$ is
separable if and only if there exist $m$-mode and $n$-mode
Gaussian states with covariance matrices $X$ and $Y$
respectively such that
\[\begin{bmatrix} A & B^T \\ B & C \end{bmatrix} \ge
\begin{bmatrix} X & \\ & Y \end{bmatrix}.\]
Set $E=A-X$, $G=C-Y$, and $F=B$. Then the above inequality
can be expressed as
\[\begin{bmatrix} E & F^T \\F & G \end{bmatrix}\ge 0.\]
By the previous discussions and Theorem \ref{th:1}, we need
to construct a real, symmetric, $n\times n$ matrix $\varphi$
such that for every  $k$-extension, the matrix 
\[
\begin{bmatrix} 
E & F^T & F^T & F^T & \cdots & F^T\\
F & G & \varphi & \varphi & \cdots & \varphi\\
F & \varphi & G & \varphi & \cdots & \varphi\\
F & \varphi & \varphi & G & \cdots & \varphi\\
\vdots & \vdots & \vdots & \vdots & \ddots & \vdots\\
F & \varphi& \varphi& \varphi & \cdots & G
\end{bmatrix}\ge 0.\]
Calculations similar to those in Theorem \ref{th:1}, show
that this is possible if and only if 
\begin{equation}\label{th2:e1}
G \ge \varphi \ge F E^- F^T.
\end{equation}
We choose 
\begin{equation} \label{th2:e2}
\varphi = t G + (1-t) F E^- F^T, \quad t\in[0,1].
\end{equation}
Notice that for every $k=1,2,\cdots$, 
\[ \begin{bmatrix} X & && \\& Y && \\ &&\ddots & \\ &&&Y
\end{bmatrix} + \begin{bmatrix} 
E & F^T & F^T & F^T & \cdots & F^T\\
F & G & \varphi & \varphi & \cdots & \varphi\\
F & \varphi & G & \varphi & \cdots & \varphi\\
F & \varphi & \varphi & G & \cdots & \varphi\\
\vdots & \vdots & \vdots & \vdots & \ddots & \vdots\\
F & \varphi& \varphi& \varphi & \cdots & G
\end{bmatrix} = 
\begin{bmatrix} 
A & B^T & B^T & B^T & \cdots & B^T\\
B & C & \varphi & \varphi & \cdots & \varphi\\
B & \varphi & C & \varphi & \cdots & \varphi\\
B & \varphi & \varphi & C & \cdots & \varphi\\
\vdots & \vdots & \vdots & \vdots & \ddots & \vdots\\
B & \varphi& \varphi& \varphi & \cdots & C
\end{bmatrix},
\]
where the first term in the left hand side, $X
\oplus ( \oplus_k Y)$, is a Gaussian covariance matrix, and
the second one is a positive matrix thanks to the construction
above. Thus the right hand side is also a Gaussian covariance
matrix. Hence the theorem is proved with the extension
matrix $\varphi$ satisfying equation (\ref{th2:e2}).
\end{proof}

\begin{theorem}\label{th:3}
Any completely extendable Gaussian state is separable.
\end{theorem}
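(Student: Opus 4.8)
The plan is to derive separability from the covariance-matrix criterion of Theorem \ref{th:1} together with the Werner--Wolf separability criterion already invoked in the proof of Theorem \ref{th:2}. Since $\rho$ is completely extendable, Theorem \ref{th:1} supplies a real positive matrix $\theta$ with
\[ C+\frac{\imath}{2}J_{2n}\ \ge\ \theta\ \ge\ B^T\Big(A+\frac{\imath}{2}J_{2m}\Big)^{-}B . \]
By the Werner--Wolf criterion it is then enough to exhibit an $m$-mode and an $n$-mode Gaussian covariance matrix $X$ and $Y$ (that is, real symmetric matrices with $X+\frac{\imath}{2}J_{2m}\ge 0$ and $Y+\frac{\imath}{2}J_{2n}\ge 0$) for which $\begin{bmatrix}A&B\\ B^T&C\end{bmatrix}\ge\begin{bmatrix}X&0\\0&Y\end{bmatrix}$, equivalently $\begin{bmatrix}A-X&B\\ B^T&C-Y\end{bmatrix}\ge 0$.

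The matrix $\theta$ dictates the choice. I would set $Y=C-\theta$ and $X=A-B\theta^{-}B^T$, both real and symmetric. The left half of the displayed inequality gives $Y+\frac{\imath}{2}J_{2n}=(C-\theta)+\frac{\imath}{2}J_{2n}\ge 0$, so $Y$ is a bona fide covariance matrix. With these choices the required difference collapses to $\begin{bmatrix}A-X&B\\ B^T&C-Y\end{bmatrix}=\begin{bmatrix}B\theta^{-}B^T&B\\ B^T&\theta\end{bmatrix}$, which factors as a manifestly positive matrix $\begin{bmatrix}B\theta^{-}\\ I\end{bmatrix}\theta\begin{bmatrix}\theta^{-}B^T&I\end{bmatrix}$ once one knows that the columns of $B^T$ lie in the range of $\theta$. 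To see that $X$ is also a covariance matrix, observe that the right half of the inequality is exactly the Schur-complement condition (Theorem \ref{cu:B}) for $\begin{bmatrix}A+\frac{\imath}{2}J_{2m}&B\\ B^T&\theta\end{bmatrix}\ge 0$; taking the Schur complement of this positive block matrix with respect to $\theta$ then yields $X+\frac{\imath}{2}J_{2m}=\big(A+\frac{\imath}{2}J_{2m}\big)-B\theta^{-}B^T\ge 0$. Granting both facts, $S\ge X\oplus Y$, and the criterion of Theorem \ref{th:2} forces $\rho$ to be separable.

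The main obstacle is entirely one of degeneracy, since the Schur-complement manipulations above are transparent only when $A+\frac{\imath}{2}J_{2m}$ and $\theta$ are invertible. Pure marginal states make $A+\frac{\imath}{2}J_{2m}$ singular, and I would dispose of this exactly as in Lemma \ref{le1} and Theorem \ref{th:1}: put $A$ and $C$ into Williamson normal form, invoke Proposition \ref{pro1} to see that $B$ vanishes on every block where a marginal is pure, and run the construction on the reduced blocks $A',B',C'$, for which $A'+\frac{\imath}{2}J_{2k}$ is invertible. With $A'+\frac{\imath}{2}J_{2k}$ invertible, the inequality $\theta'\ge B'^T(A'+\frac{\imath}{2}J_{2k})^{-1}B'$ forces $\ker\theta'\subseteq\ker B'$, which is precisely the range inclusion needed above and which also legitimises the Moore--Penrose (generalised) Schur complement in the event that $\theta'$ itself is singular. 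Finally I would pad the resulting $X',Y'$ with copies of $\tfrac12 I_2$, the vacuum covariance matrix of a pure Gaussian state, on the remaining modes; since $B=B'\oplus 0$ the inequality $S\ge X\oplus Y$ persists on the full system, and Werner--Wolf delivers separability. The delicate point throughout is the careful bookkeeping of ranges and Moore--Penrose inverses in these degenerate cases.
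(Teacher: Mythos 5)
Your proposal is correct and follows essentially the same route as the paper: invoke Theorem \ref{th:1} to obtain $\theta$, apply the Werner--Wolf criterion, and take $Y=C-\theta$ together with $X=A-B\theta^{-}B^T$, which is exactly the paper's choice (the paper writes $X=A-\bigl(A+\frac{\imath}{2}J_{2m}\bigr)^{\frac{1}{2}}KK^\dag\bigl(A+\frac{\imath}{2}J_{2m}\bigr)^{\frac{1}{2}}$ with $K$ the contraction from Theorem \ref{cu:C}, which reduces to the same matrix). The only difference is presentational: you justify positivity of $X+\frac{\imath}{2}J_{2m}$ and of $S-(X\oplus Y)$ via generalized Schur complements and the range inclusion $\ker\theta\subseteq\ker B$, while the paper uses the factorization $B=\bigl(A+\frac{\imath}{2}J_{2m}\bigr)^{\frac{1}{2}}K\theta^{\frac{1}{2}}$ for the same purpose.
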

\begin{proof}
Let $S=\begin{bmatrix} A & B \\ B^T & C \end{bmatrix}$ be the
covariance matrix of a $(m+n)$ mode Gaussian state $\rho$, 
which is completely
extendable by a real symmetric positive matrix $\theta$
satisfying inequalities (\ref{eq:xx}) of Theorem \ref{th:1}. 
By the result of  Werner and Wolf
\cite{PhysRevLett.86.3658}, it is enough to find $m$ mode
and $n$ mode Gaussian states with covariance matrices $X$
and $Y$ respectively such that 
\begin{equation}\label{eq:y}
\begin{bmatrix} A & B \\ B^T & C \end{bmatrix} \ge
\begin{bmatrix} X & \\ & Y \end{bmatrix}.
\end{equation}
Since $S$ is extendable, $\theta \ge B^T \left( A +
\frac{\imath}{2} J_{2m} \right)^{-} B$, which is equivalent
to the matrix condition 
\[\begin{bmatrix} A + \frac{\imath}{2} J_{2m} & B \\ B^T &
\theta  \end{bmatrix} \ge 0.\]
Using Theorem \ref{cu:C}, this is equivalent to the condition that there is a
contraction $K$ with $\|K\|\le 1$ such that $B = \left( A +
\frac{\imath}{2} J_{2m}\right)^{\frac{1}{2}} K
\theta^{\frac{1}{2}}$.  Here
$\theta^{\frac{1}{2}} = \theta'^{\frac{1}{2}} \bigoplus
\bm{0}$,  where $\bm{0}$ being the zero matrix of appropriate
order, as in the proof of Theorem \ref{th:2}. In a similar way, we may define
$\theta^{-\frac{1}{2}} = \theta'^{-\frac{1}{2}} \oplus
\bm{0}$, which is a real positive matrix. Since $B$ and
$\theta$ are real matrices, so is $B\theta^{-\frac{1}{2}} = \left( A +
\frac{\imath}{2} J_{2m}\right)^{\frac{1}{2}} K
$. Hence, we can choose the contraction $K$ to be such that
the product in the right hand side is a real matrix.  Choose 
\begin{eqnarray*}
X &=& A - \left( A + \frac{\imath}{2}
J_{2m}\right)^{\frac{1}{2}} K K^\dag \left( A +
\frac{\imath}{2} J_{2m}\right)^{\frac{1}{2}}, \\
Y &=& C-\theta.
\end{eqnarray*}
By our choice of contraction $K$, $X$ chosen above is a real matrix. 
It follows from the left half of inequalities (\ref{eq:xx})
that $Y$ is a Gaussian covariance matrix. To see that $X$ is
so, observe that $\|K\|\le 0$, and  note that 
\begin{eqnarray*}
X+ \frac{\imath}{2} J_{2m} &=& \left( A + \frac{\imath}{2}
J_{2m}\right) -\left( A + \frac{\imath}{2}
J_{2m}\right)^{\frac{1}{2}} KK^\dag \left( A +
\frac{\imath}{2} J_{2m}\right)^{\frac{1}{2}} \\
&=&  \left( A + \frac{\imath}{2}
J_{2m}\right)^{\frac{1}{2}} (I - KK^\dag) \left( A +
\frac{\imath}{2} J_{2m}\right)^{\frac{1}{2}} \ge 0.
\end{eqnarray*}
To check that our choice of $X$ and $Y$ satisfies inequality
(\ref{eq:y}), we observe that -
\begin{eqnarray*}
\begin{bmatrix} A & B \\ B^T & C \end{bmatrix} -
\begin{bmatrix} X & \\ & Y \end{bmatrix} &=&
\begin{bmatrix}
\left( A + \frac{\imath}{2}
J_{2m}\right)^{\frac{1}{2}} K K^\dag \left( A +
\frac{\imath}{2} J_{2m}\right)^{\frac{1}{2}} & \left( A +
\frac{\imath}{2} J_{2m}\right)^{\frac{1}{2}} K
\theta^{\frac{1}{2}} \\
\theta^{\frac{1}{2}}K^\dag  \left( A +
\frac{\imath}{2} J_{2m}\right)^{\frac{1}{2}} & \theta
\end{bmatrix}\\
&=& \begin{bmatrix} \left( A + \frac{\imath}{2}
J_{2m}\right)^{\frac{1}{2}} & \\ & \theta^{\frac{1}{2}}
\end{bmatrix} 
\begin{bmatrix} K \\ I \end{bmatrix} \begin{bmatrix} K^\dag & I \end{bmatrix} 
\begin{bmatrix} \left( A + \frac{\imath}{2}
J_{2m}\right)^{\frac{1}{2}} & \\ & \theta^{\frac{1}{2}}
\end{bmatrix} \ge 0.
\end{eqnarray*}
Hence the theorem is proved.  
\end{proof}
We combine Theorem \ref{th:2} and \ref{th:3} to get a
necessary and sufficient condition for separability of
Gaussian states in the following theorem.
\begin{theorem}\label{th:4}
Any bipartite Gaussian state $\rho$ in $\Gamma(\mathbb{C}^m)
\otimes \Gamma(\mathbb{C}^n)$ is separable if and only if it
is completely extendable. 
\end{theorem}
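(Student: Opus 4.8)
The plan is to observe that this theorem is purely the conjunction of the two immediately preceding results, so no new argument is required; it asserts an equivalence, which I would split into its two implications and dispatch each by direct citation. For the forward implication, that separability implies complete extendability, I would invoke Theorem \ref{th:2}, which establishes precisely this for any bipartite Gaussian state. For the reverse implication, that complete extendability implies separability, I would invoke Theorem \ref{th:3}, which supplies exactly the converse. Chaining the two yields the desired biconditional.

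More concretely, I would fix a bipartite Gaussian state $\rho$ in $\Gamma(\mathbb{C}^m) \otimes \Gamma(\mathbb{C}^n)$ with covariance matrix $S = \begin{bmatrix} A & B \\ B^T & C \end{bmatrix}$. If $\rho$ is separable, Theorem \ref{th:2} produces, via the Werner--Wolf decomposition and the explicit interpolating choice $\varphi = t G + (1-t) F E^{-} F^T$, a permutation-invariant Gaussian $k$-extension for every $k$, so $\rho$ is completely extendable. Conversely, if $\rho$ is completely extendable, Theorem \ref{th:3} produces the marginal covariance matrices $X$ and $Y$ certifying separability through the same Werner--Wolf criterion. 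Since the two theorems together cover both directions for an arbitrary such $\rho$, the equivalence follows.

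The main obstacle at this final step is essentially none, since all the analytic content has already been discharged. The genuine difficulties lay earlier: in reducing complete extendability to the single covariance-matrix inequality of Theorem \ref{th:1} (including the passage to the Moore--Penrose inverse when pure marginals occur), and in constructing the explicit extension matrix $\theta$, equivalently $\varphi$, interpolating between $B^T\left(A + \frac{\imath}{2} J_{2m}\right)^{-} B$ and $C + \frac{\imath}{2} J_{2n}$. Once those are in place, the present theorem is merely the bookkeeping step that packages them into a clean necessary-and-sufficient criterion.
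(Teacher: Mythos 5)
Your proposal is correct and matches the paper exactly: the paper states Theorem \ref{th:4} as the immediate combination of Theorem \ref{th:2} (separability implies complete extendability) and Theorem \ref{th:3} (complete extendability implies separability), with no additional argument. Your citation of the two preceding theorems for the two implications is precisely the paper's own reasoning.
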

%%%%%%%%%%%%%%%%%%%%%%%%%%%%%%%%%%%%%%%
\section{Complete extendability and separability in general
case}\label{sec:3}

\par Consider a separable Hilbert space $\mathfrak{h}$ and
denote $\mathcal{B} =\mathcal{B}(\mathfrak{h})$ the C*
algebra of all bounded operators on $\mathfrak{h}$. Let
$\mathcal{B}_n = \mathcal{B}(\mathfrak{h}^{\otimes n}) =
\mathcal{B}^{\otimes n}$ be the $n$-fold tensor product of
copies of $\mathcal{B}$. Then $\mathcal{B}_n$ can be
embedded as a C* algebra of $\mathcal{B}_{n+1}$ by the map
$X \mapsto X \otimes I,\, X\in \mathcal{B}_n, \, I$ being
the identity operator in $\mathcal{B}$. This enables the
construction of an inductive limit C* algebra
$\mathcal{B}^\infty$ such that there exists a C* embedding
$i_n: \mathcal{B}_n \hookrightarrow \mathcal{B}^\infty$ such
that $i_{n-1}  (X) = i_n (X \otimes I)$ for all $X \in
\mathcal{B}_{n-1},\, n=2,3,\cdots$. Let $\mathfrak{S}$
denote the set of all states in $\mathcal{B}^\infty$ equipped
with the weak* topology. Then $\mathfrak{S}$ is a compact
convex set. For any $\omega \in \mathfrak{S}$, define
\[\omega_n(X) = \omega(i_n(X)), \quad X \in \mathcal{B}_n.\]
Then $\omega_n$ is a state in $\mathcal{B}_n$ for all $n$
and 
\[\omega_{n-1}(X)= \omega_n(X \otimes I),\quad \forall X \in
\mathcal{B}_{n-1},\, n=2,3,\cdots.\]
in other words $\{\omega_n\}$ is a consistent family of
states in $\{\mathcal{B}_n\}$, $ n=2,3,\cdots$ with the
projective limit $\omega$. 

\par Conversely, let $\omega_n$ be a state in
$\mathcal{B}_n$ for each $n=1,2,3,\cdots$ such that
$\omega_n(X \otimes I) =\omega_{n-1}(X \otimes I),\, \forall
X \in \mathcal{B}_{n-1},\, n=2,3,\cdots$. Then there exists
a unique state $\omega$ in $\mathcal{B}^\infty$ such that 
\[\omega(i_n(X)) = \omega_n(X),\quad \forall X \in
\mathcal{B}_n,\, n=1,2,3,\cdots.\]

\begin{defin}
A state $\omega$ in $\mathcal{B}^\infty$ is said to be
\emph{locally normal} if each $\omega_n$ in $\mathcal{B}_n,
\, n=1,2,\cdots$ is determined by a density operator
$\rho_n,\, n=1,2,\cdots$, i.e., a positive operator $\rho_n$
of unit trace in $\mathfrak{h}^{\otimes n}$ satisfying 
\[\omega_n(X) = \tr \rho_nX, \quad X \in \mathcal{B}_n,\,
n=1,2,\cdots.\]
Then the relative trace of $\rho_n$ in
$\mathfrak{h}^{\otimes n}$  over the last copy of
$\mathfrak{h}$ is equal to $\rho_{n-1}$ for each
$n=2,3,\cdots.$
\end{defin}

\begin{defin}
A state in $\mathcal{B}^\infty$ is said to be
\emph{exchangeable}  if for any permutation $\pi$ of
$\{1,2,\cdots, n\}$ and operators $X_j \in \mathcal{B}, \,
i=1,2,\cdots,n$
\[\omega_n(X_{\pi(1)} \otimes X_{\pi(2)} \otimes \cdots
\otimes X_{\pi(n)}) = \omega_n (X_1 \otimes X_2 \otimes
\cdots \otimes X_n) = \omega( i_n(X_1 \otimes X_2 \otimes
\cdots \otimes X_n)).\]
\end{defin}
 
\par We shall now describe a version of quantum de Finetti
theorem due to Hudson and Moody \cite{MR0397421} (see also
St{\o}rmer \cite{MR0241992} for an abstract C* algebraic
version) which we shall make use of in our analysis of
complete extendability - separability problem. To this end
denote by $\mathcal{R}_\mathfrak{h}$ the set of all density
operators on $\mathfrak{h}$. Viewing
$\mathcal{R}_\mathfrak{h}$ as a subset of the dual of
$\mathcal{B}= \mathcal{B}_\mathfrak{h}$, equip it with the
relative topology inherited from the weak* topology. Let
$\mathcal{P}_\mathfrak{h}$ denote the set of all probability
measures on the Borel $\sigma$-algebra of
$\mathcal{R}_\mathfrak{h}$.

\begin{theorem} \label{th3.1}[Hudson and Moody]
A locally normal state $\omega$ on $\mathcal{B}^\infty$ is
exchangeable if and only if there exists a probability
measure $P_\omega$ in $\mathcal{P}_\mathfrak{h}$ such that 
\[\omega(i_n(X)) = \int_{\mathcal{R}_\mathfrak{h}} \tr
\rho^{\otimes n} X\, P_\omega(\dd \rho), \quad \forall X \in
\mathcal{B}_n,\, n=1,2,\cdots.\]
The correspondence $\omega \rightarrow P_\omega$ between the
set of locally normal and exchangeable states and the set
$\mathcal{P}_\mathfrak{h}$ of probability measures on
$\mathcal{R}_\mathfrak{h}$ is bijective.
\end{theorem}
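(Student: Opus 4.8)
The plan is to dispatch the easy implication directly and then treat the converse as an ergodic-decomposition problem. For the \emph{if} direction, suppose $\omega$ is given by the stated integral against some $P_\omega \in \mathcal{P}_\mathfrak{h}$. I would set $\rho_n = \int_{\mathcal{R}_\mathfrak{h}} \rho^{\otimes n}\, P_\omega(\dd\rho)$, check that this is a density operator (positivity is clear, and unit trace follows from $\tr \rho^{\otimes n} = 1$ together with dominated convergence applied to the trace functional), and verify exchangeability from the permutation invariance of each $\rho^{\otimes n}$. Since the partial trace over the last copy satisfies $\tr_{\mathfrak h}\,\rho^{\otimes n} = \rho^{\otimes(n-1)}$, the family $\{\rho_n\}$ is automatically consistent and $\omega$ is locally normal. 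This direction is routine.

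The substance is the \emph{only if} direction, which I would prove by an ergodic decomposition of the exchangeable state together with an identification of the ergodic components with product states $\rho^{\otimes\infty}$. First I would pass to the GNS representation $(\pi_\omega, \mathcal H_\omega, \Omega_\omega)$ of $\omega$ and use exchangeability to implement the group of finite permutations by unitaries fixing $\Omega_\omega$. For each fixed $X \in \mathcal B$ I would consider the symmetrized averages $A_N(X) = \frac1N \sum_{j=1}^N \pi_\omega\bigl(i_N(I\otimes\cdots\otimes X \otimes\cdots\otimes I)\bigr)$, with $X$ in the $j$-th slot; a mean-ergodic argument, which works because the permutation action is asymptotically abelian, shows these converge strongly to an element of the centre of the von Neumann algebra generated by $\pi_\omega(\mathcal B^\infty)$. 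The abelian algebra $\mathfrak Z$ generated by all these limits is the algebra ``at infinity'', and its Gelfand spectrum, equipped with the spectral measure in the vector state $\Omega_\omega$, furnishes the probability space underlying $P_\omega$.

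Next I would disintegrate $\omega$ over $\mathfrak Z$, writing $\omega = \int \omega_\xi\, d\mu(\xi)$ as a mixture of ergodic exchangeable states. For each ergodic component the averages $A_N(X)$ converge to the scalar $\omega_\xi(i_1(X))\,I$, and a law-of-large-numbers computation then forces $\omega_\xi(i_n(X_1\otimes\cdots\otimes X_n)) = \prod_{r=1}^n \omega_\xi(i_1(X_r))$, i.e. $\omega_\xi$ is a product state $\rho_\xi^{\otimes\infty}$ for a single-site density operator $\rho_\xi \in \mathcal R_\mathfrak h$. Pushing $\mu$ forward along $\xi \mapsto \rho_\xi$ yields $P_\omega \in \mathcal P_\mathfrak h$ and the claimed integral representation. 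For bijectivity, surjectivity is the \emph{if} direction, while injectivity follows because the numbers $\int \tr \rho^{\otimes n} X\, P(\dd\rho)$ recover every moment $\int \prod_r \tr(\rho Y_r)\, P(\dd\rho)$, and such polynomial functions of $\rho$ separate measures on $\mathcal R_\mathfrak h$.

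The hard part will be the infinite-dimensional analytic bookkeeping rather than the conceptual skeleton: ensuring the ergodic averages converge when $\mathcal B = \mathcal B(\mathfrak h)$ is non-separable, that the disintegration is measurable, and above all that each ergodic component is itself \emph{locally normal}, so that $\rho_\xi$ is a genuine density operator rather than merely a state on $\mathcal B$. This last point --- upgrading the abstract $C^*$-algebraic de Finetti measure on the state space to a measure on $\mathcal R_\mathfrak h$ --- is precisely where local normality of $\omega$ must be exploited, and is the delicate step that distinguishes the Hudson--Moody theorem from St\o rmer's purely $C^*$-algebraic version.
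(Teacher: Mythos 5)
First, note that the paper does not prove this statement at all: Theorem \ref{th3.1} is quoted from Hudson and Moody \cite{MR0397421}, with St{\o}rmer \cite{MR0241992} cited for the abstract C*-algebraic version, so there is no in-paper proof to compare your attempt against. Your sketch follows the standard ergodic-decomposition route (GNS representation, asymptotic abelianness of the permutation action, mean ergodic averages generating an abelian algebra at infinity, disintegration over its spectrum, and a clustering argument identifying ergodic components with product states). That skeleton is essentially St{\o}rmer's theorem and is sound as far as it goes, and your injectivity argument is recoverable: each function $\rho \mapsto \tr \rho Y$ is bounded, so the finite-dimensional distributions are compactly supported and determined by their moments.

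The genuine gap is that you have only named, not proved, the step that constitutes the actual content of Hudson--Moody beyond St{\o}rmer: that the representing measure is carried by $\mathcal{R}_\mathfrak{h}$, i.e.\ by \emph{normal} states, rather than by the full weak*-compact state space of $\mathcal{B}(\mathfrak{h})$, which contains singular states and whose set of product states is what the ergodic or Choquet machinery naturally delivers as extreme points. Since $\mathcal{R}_\mathfrak{h}$ is not weak*-closed, nothing in the decomposition by itself places the measure there; one must argue --- for instance via the Takesaki decomposition of a state into normal and singular parts, the fact that the normal states form a split face, and the measurability of that splitting --- that local normality of the one-site marginal forces the directing measure to charge only normal states. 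A second, related gap is the disintegration itself: $\mathcal{B}(\mathfrak{h})$ is non-separable, so the direct-integral/disintegration theory you invoke is not available off the shelf; the usual remedy is Choquet--Bishop--de Leeuw theory on the simplex of symmetric states, where maximal measures are only pseudo-concentrated on the extreme points, which makes the reduction to $\mathcal{R}_\mathfrak{h}$ nontrivial rather than bookkeeping. Until these two points are supplied, your argument establishes St{\o}rmer's version, not the statement as given --- and it is precisely the normal-state refinement that the paper actually uses downstream in Theorem \ref{th3.2}.
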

 
\begin{rem}
Theorem \ref{th3.1} shows that exchanbeability property
automatically implies that every finite dimensional
projection of $\omega$, namely $\omega_n$, is separable. It
is natural to expect that complete extendeability would
force separability. 
\end{rem}

\begin{theorem}\label{th3.2}
Let $\mathfrak{h}_0,\, \mathfrak{h}$ be Hilbert spaces with
$\dim \mathfrak{h}_0 >2$ and 
$\rho$ be a density operator in $\mathfrak{h}_0 \otimes
\mathfrak{h}$. Let $\mathcal{B}_{n]} =
\mathcal{B}(\mathfrak{h}_0 \otimes \mathfrak{h}^{\otimes
n}), \, n=0,1,2,\cdots$. Suppose there exist density
operators $\rho_n$ in $\mathfrak{h}_0 \otimes
\mathfrak{h}^{\otimes n},\, n=1,2,\cdots$ satisfying the
following properties:
\begin{enumerate}
\item $\rho_1 =\rho$ and 
\[\tr \rho_n(X \otimes I) = \tr \rho_{n-1} X, \quad X \in
\mathcal{B}_{n]},\]
$I$ being the identity in $\mathfrak{h},\, n=1,2,\cdots$.
\item For any $X_0 \in \mathcal{B}(\mathfrak{h}_0), \, Y_j
\in \mathcal{B}(\mathfrak{h}),\, j=1,2,\cdots,n$ and any
permutation $\pi$ of $\{1,2,\cdots,n\}$
\[ \tr \rho_n \, X_0 \otimes Y_1 \otimes \cdots \otimes Y_n
= \tr \rho_n  \, X_0 \otimes Y_{\pi(1)} \otimes \cdots \otimes
Y_{\pi(n)}.\]
\end{enumerate}
Then $\rho$ is separable in $\mathfrak{h}_0 \otimes
\mathfrak{h}$. Furthermore $\rho_n$ is separable in
$\mathfrak{h}_0 \otimes \mathfrak{h}^{\otimes n},\, n=1,2,\cdots.$
\end{theorem}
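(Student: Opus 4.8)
The plan is to reduce the statement to the Hudson--Moody theorem (Theorem \ref{th3.1}) by reading hypotheses (1) and (2) as the assertion that the operators $\rho_n$ assemble into a single locally normal state which is exchangeable in the $\mathfrak{h}$-factors, with $\mathfrak{h}_0$ acting as a passive spectator. First I would build, exactly as in the paragraphs preceding Theorem \ref{th3.1}, the inductive limit C* algebra generated by the embeddings $\mathcal{B}(\mathfrak{h}_0)\otimes\mathcal{B}(\mathfrak{h})^{\otimes n}\hookrightarrow\mathcal{B}(\mathfrak{h}_0)\otimes\mathcal{B}(\mathfrak{h})^{\otimes(n+1)}$, $X\mapsto X\otimes I$. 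Hypothesis (1) is precisely the consistency (marginal) condition that lets the family $\{\rho_n\}$ determine a unique locally normal state $\omega$ on this limit, and hypothesis (2) says that $\omega$ is invariant under every finite permutation of the copies of $\mathfrak{h}$ while fixing the $\mathcal{B}(\mathfrak{h}_0)$ factor.

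To bring in the spectator I would fix a positive $X_0\in\mathcal{B}(\mathfrak{h}_0)$ and consider the functional $Y\mapsto\tr\rho_n(X_0\otimes Y)$ on $\mathcal{B}(\mathfrak{h})^{\otimes n}$. By (1) this family is consistent in $n$, and by (2) it is permutation invariant; after normalising by $c(X_0)=\tr\rho_n(X_0\otimes I^{\otimes n})=\tr(\rho_{\mathfrak{h}_0}X_0)$, which is independent of $n$, it becomes a locally normal exchangeable state on the $\mathfrak{h}$-chain. Theorem \ref{th3.1} then supplies a probability measure $P_{X_0}$ on $\mathcal{R}_{\mathfrak{h}}$ with $\tr\rho_n(X_0\otimes Y)=c(X_0)\int_{\mathcal{R}_\mathfrak{h}}\tr(\rho^{\otimes n}Y)\,P_{X_0}(\dd\rho)$ for every $n$ and every $Y\in\mathcal{B}(\mathfrak{h})^{\otimes n}$.

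The next step is to glue the $X_0$-indexed family $\{c(X_0)P_{X_0}\}$ into a single joint object. The assignment $X_0\mapsto c(X_0)P_{X_0}$ is linear and positive from $\mathcal{B}(\mathfrak{h}_0)$ into the finite measures on $\mathcal{R}_{\mathfrak{h}}$. Disintegrating this measure-valued map against $P:=P_{I}$ and using local normality to realise its Radon--Nikodym derivative at each $\rho$ as a density operator, I expect to obtain a probability measure $P$ on $\mathcal{R}_\mathfrak{h}$ together with a measurable family $\rho\mapsto\sigma_\rho\in\mathcal{R}_{\mathfrak{h}_0}$ such that
\[\tr\rho_n\,(X_0\otimes Y)=\int_{\mathcal{R}_\mathfrak{h}}\tr(\sigma_\rho X_0)\,\tr(\rho^{\otimes n}Y)\,P(\dd\rho)\]
for all $X_0$, all $Y$ and all $n$. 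Equivalently $\rho_n=\int_{\mathcal{R}_\mathfrak{h}}\sigma_\rho\otimes\rho^{\otimes n}\,P(\dd\rho)$, which exhibits $\rho_n$ as a norm-convergent mixture of product states across the cut $\mathfrak{h}_0\mid\mathfrak{h}^{\otimes n}$, hence as separable; specialising to $n=1$ gives the separability of $\rho=\rho_1$.

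The main obstacle is precisely this gluing/disintegration step. Unlike the finite-dimensional situation of Theorem \ref{th:a}, here $\mathcal{B}(\mathfrak{h})$ need not be separable and $\mathcal{R}_\mathfrak{h}$ is not weak* compact, so producing a genuine \emph{measurable} conditional family $\sigma_\rho$ of states on $\mathfrak{h}_0$ — rather than merely a consistent system of $\mathfrak{h}_0$-marginals — requires care: one must check that the disintegration against $P$ returns, for $P$-almost every $\rho$, a positive operator of unit trace. It is exactly at this point that the local normality of $\omega$ and the dimension hypothesis $\dim\mathfrak{h}_0>2$ are invoked. Everything else — the consistency, the permutation invariance, and the passage to the inductive limit — is routine bookkeeping built on the construction recalled just before Theorem \ref{th3.1}.
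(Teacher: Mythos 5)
Your reduction to Theorem \ref{th3.1} via a spectator effect $X_0$ is exactly the paper's first move (their $\omega_A$ for $0\le A\le I$), and you have correctly located the crux in the passage from the effect-indexed family $X_0\mapsto c(X_0)P_{X_0}$ to a single decomposition $\rho_n=\int \sigma_\rho\otimes\rho^{\otimes n}\,P(\dd\rho)$. But that is precisely the step you do not carry out: ``I expect to obtain'' is where the proof actually lives, and the missing idea is \emph{Gleason's theorem}. The difficulty is not, in the first instance, a measurable-selection or disintegration problem on $\mathcal{R}_\mathfrak{h}$; it is that for a fixed Borel set $F$ you only know that $A\mapsto\rho_0(A)\mu_A(F)$ is additive on pairs of effects with $A+B\le I$ (a consequence of the uniqueness clause of Theorem \ref{th3.1}), and you must show that this additive function of the effect is actually implemented by a positive trace-class operator $T(F)$. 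The paper does this by restricting to rank-one projections: $\bm{u}\mapsto\rho_0(\ketbra{\bm{u}})\,\mu_{\ket{\bm{u}}\bra{\bm{u}}}(F)$ sums to $\mu_I(F)$ over every orthonormal basis of $\mathfrak{h}_0$, i.e.\ it is a frame function, and Gleason's theorem --- this is where $\dim\mathfrak{h}_0>2$ enters, not in the disintegration over $\mathcal{R}_\mathfrak{h}$ --- produces $T(F)$ with $\braket{\bm{u}|T(F)|\bm{u}}=\rho_0(\ketbra{\bm{u}})\mu_{\ket{\bm{u}}\bra{\bm{u}}}(F)$. You flag that the dimension hypothesis is invoked at this point but give no mechanism; without Gleason (or a substitute) the gluing does not go through.

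Once $T(\cdot)$ exists as a positive operator-valued measure with $T(\mathcal{R}_\mathfrak{h})=\rho_0$ and $\tr T(F)A=\rho_0(A)\mu_A(F)\le\mu_I(F)$, the remainder is comparatively routine and close to what you sketch: each complex measure $\braket{\bm{e}_i|T(\cdot)|\bm{e}_j}$ is absolutely continuous with respect to $\mu_I$, the Radon--Nikodym derivatives $f_{ij}(\sigma)$ assemble a.e.\ into density operators $\tau(\sigma)$ on $\mathfrak{h}_0$ (positivity of the finite blocks from positivity of $T(F)$, unit trace from $\tr T(F)=\mu_I(F)$), and then $\rho_n(A\otimes Y)=\int\tau(\sigma)(A)\,\sigma^{\otimes n}(Y)\,\mu_I(\dd\sigma)$ exhibits every $\rho_n$ as separable. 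So your outline is the right one, but the theorem's only nontrivial ingredient beyond Hudson--Moody is absent from your argument.
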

\begin{proof}
We adopt the convention that for any density operator $\rho$
and any operator $X$ in a Hilbert space $\rho(X)=\tr \rho
X$.
Let $\mathcal{B}_n,\, i_n,\, \mathcal{B}^\infty$ be as
defined at the beginning of this section. Let $\rho_0$ be
the relative trace of $\rho$ over $\mathfrak{h}$ in
$\mathfrak{h}_0 \otimes \mathfrak{h}$. Choose and fix an
operator $0\le A \le I$ in $\mathfrak{h}_0$ such that
$\rho_0(A) = \tr \rho_0 A>0$. Then there exists a
well-defined state $\omega_A$ in the C* algebra
$\mathcal{B}^\infty$ satisfying 
\begin{equation}\label{eq:3.1}
\omega_A(i_n(Y)) =\frac{\rho_n(A \otimes Y)}{\rho_0(A)},
\quad Y \in \mathcal{B}_n,\, n=1,2,\cdots.
\end{equation}
Indeed, this follows from property (1) of $\{\rho_n\}$. Now
property (2) of $\rho_n$ implies that $\omega_A$ is
exchangeable   and locally normal. Thus by Theorem
\ref{th3.1} there exists a unique probability measure
$\mu_A$ on the Borel $\sigma$-algebra of
$\mathcal{R}_\mathfrak{h}$ such that 
\begin{equation}\label{eq:3.2}
\omega_A = \int_{\mathcal{R}_\mathfrak{h}} \sigma^\infty
\mu_A(\dd\sigma)
\end{equation}
where $\sigma^\infty$ denotes the unique state in
$\mathcal{B}^\infty$ satisfying 
\[\sigma^\infty(i_n(Y_1 \otimes \cdots \otimes Y_n)) =
\sigma(Y_1) \sigma(Y_2) \cdots \sigma(Y_n)\]
for all $Y_j \in \mathcal{B}(\mathfrak{h}),\, n=1,2,\cdots$.
\par In particular, the probability measure $\mu_I$ is
well-defined. When $\rho_0(A)=0$ we define $\mu_A$ to be
$\mu_I$. 
\par Equations (\ref{eq:3.1}) and (\ref{eq:3.2}) imply 
\begin{equation}\label{eq:3.3}
\rho_n(A \otimes Y) = \rho_0(A) \int_{\mathcal{R}_\mathfrak{h}}
\sigma^{\otimes n}(Y)\mu_A(\dd\sigma),\quad Y\in
\mathcal{B}_n,\, n=1,2,\cdots
\end{equation}
whenever $\rho_0(A) >0$. If $\rho_0(A)=0$ it follows from
the inequality $- \|Y\| A \otimes I \le A \otimes Y \le
\|Y\| A \otimes I$ that $\rho_n(A \otimes Y) =0$ whenever
$Y$ is self adjoint. Thus $\rho_n(A \otimes Y) =0$ for any
$Y \in \mathcal{B}_n$ whenever $\rho_0(A)=0$. Hence
(\ref{eq:3.3}) holds for all $0\le A \le I$.

\par Let $A\ge 0,\,B\ge 0,\,A+B \le I$ in
$\mathcal{B}(\mathfrak{h}_0)$. Writing down (\ref{eq:3.3})
for $A,\,B$ and $A+B$, adding the first two and comparing
with the third we get the relation 
\[\int \sigma^{\otimes n} (Y) \left(\rho_0(A) \mu_A +
\rho_0(B)\mu_B\right)(\dd\sigma) = \int \sigma^{\otimes n} (Y)
\rho_0(A+B) \mu_{A+B}(\dd\sigma),\quad Y\in \mathcal{B}_n,
\, n=1,2,\cdots.\]
If $\rho_0(A+B)>0$, dividing both sides by $\rho_0(A+B)$ and
using the uniqueness of the probability measure in Theorem \ref{th3.1}
we get 
\begin{equation}\label{eq:3.4}
\rho_0(A+B)\mu_{A+B} = \rho_0(A) \mu_A + \rho_0(B)\mu_B.
\end{equation}
If $\rho_0(A+B)=0$ then $\rho_0(A)=\rho_0(B)=0$ and hence
the same relation holds trivially. Choosing $0 \le A \le I,
\, B=I-A$ we have 
\[\mu_I= \rho_0(A)\mu_A + \rho_0(I-A)\mu_{I-A}.\]
This shows that as $A$ increases to $I$, the fact that
$\rho_0(I-A) \rightarrow 0$ implies that $\rho_0(A)
\mu_A(F)$ increases to $\rho_0(I)\mu_I(F)$. Let $\{
\bm{u}_j\}$ be an orthonormal basis for $\mathfrak{h}_0$.
For any unit vector $\bm{u} \in \mathfrak{h}_0$ and any
Borel set $F \subset \mathcal{R}_\mathfrak{h}$ define 
\begin{equation}\label{eq:3.5}
f(\bm{u}, F) = \rho_0(\ketbra{\bm{u}})
\mu_{\ket{\bm{u}}\bra{\bm{u}}} (F).
\end{equation}
Since $\sum_{j=1}^n \ketbra{\bm{u}_j}$ increases to the
identity operator as $n \rightarrow \infty$ it now follows
from (\ref{eq:3.4}) and the remark above
\[\sum_{j=1}^\infty f(\bm{u}_j, F) =\mu_I(F)\]
for any Borel set $F \subset \mathcal{R}(\mathfrak{h})$ and
any orthonormal basis. In other words, for each fixed $F$,
the map $\bm{u} \mapsto f(\bm{u},F)$ is a frame function in
the sense of Gleason \cite{MR0096113} on the unit sphere of
$\mathfrak{h}_0$. Hence by Gleason's theorem
\cite{MR0096113, krp4} there exists a positive trace class
operator $T(F)$ such that 
\[f(\bm{u},F) = \braket{\bm{u}|T(F)|\bm{u}},\quad \forall
\bm{u} \text{ with }\|\bm{u}\|=1\]
and any Borel set $F$ in $\mathcal{R}(\mathfrak{h})$. Thus
\begin{equation}\label{eq:3.6}
 \braket{\bm{u}|T(F)|\bm{u}} = \rho_0(\ketbra{\bm{u}})
\mu_{\ket{\bm{u}}\bra{\bm{u}}} (F).
\end{equation}
This together with (\ref{eq:3.3}) implies that $T(\cdot)$ is
a positive operator-valued measure satisfying 
\begin{eqnarray}
T(\mathcal{R}(\mathfrak{h})) &=& \rho_0,\nonumber\\
\tr T(F)A &=& \rho_0(A)\mu_A(F) \le \mu_I(F) \label{eq:3.7}
\end{eqnarray}
for any $0 \le A \le I$ in $\mathfrak{h}_0$ and any Borel
set $F$ in $\mathcal{R}(\mathfrak{h})$. Now choose and fix
an orthonormal basis $\{\bm{e}_j\}$ in $\mathfrak{h}_0$.
Then complex-valued measures
$\braket{\bm{e}_i|T(\cdot)|\bm{e}_j}, \, i,j=1,2,\cdots$ are
all absolutely continuous with respect to the measure
$\mu_I$ and hence there exist Radon-Nykodym derivatives
$f_{ij}$ in $\mathcal{R}(\mathfrak{h})$ satisfying the
relations 
\[\braket{\bm{e}_i|T(F)|\bm{e}_j} =\int_F f_{ij}(\sigma)
\mu_I(\dd\sigma).\]

\par The positivity of $T(F)$ for all Borel sets $F$ in
$\mathcal{R}(\mathfrak{h})$  implies that for any finite $n$
the matrix $((f_{ij}(\sigma))), \, i,j \in \{1,2,\cdots,n\}$
is positive semidefinite a.e. ($\mu_I$) for every $n$.
Furthermore,
\begin{eqnarray*}
\mu_I(F) &=& \tr T(F)\\
&=& \sum_{i=1}^\infty \braket{\bm{e}_i|T(F)|\bm{e}_i}\\
&=& \int_F \sum_i f_{ii}(\sigma) \mu_I(\dd\sigma)
\end{eqnarray*}
for all $F$. Thus 
\[ \sum_i f_{ii}(\sigma) =1,\quad \text{ a.e. } \mu_I.\]
Thus there exist density operators $\tau(\sigma), \,
\sigma\in \mathcal{R}_(\mathfrak{h})$ in $\mathfrak{h}_0$
such that 
\[ \braket{\bm{e}_i | \tau(\sigma) | \bm{e}_j}
=f_{ij}(\sigma)\]
a.e $\sigma(\mu_I)$.  Then 
\[T(F) =\int_F \tau(\sigma) \mu_I (\dd\sigma)\]
for every Borel set $F$ in $\mathcal{R}_{ \mathfrak{h}}$.
Now (\ref{eq:3.3}) and (\ref{eq:3.7}) imply that for any
$\ketbra{\bm{u}},\, \bm{u}$ a unit vector in $\mathfrak{h}_0$
\[\rho_n\left(\ketbra{\bm{u}} \otimes Y\right) =
\int_{\mathcal{R}_\mathfrak{h}} \sigma^{\otimes n} (Y)
\tau(\sigma) (\ketbra{\bm{u}}) \mu_I(\dd\sigma),\quad Y\in
\mathcal{B}_n,\, n=1,2,\cdots.\]
Thus 
\[\rho_n(A \otimes Y) = \int_{\mathcal{R}_\mathfrak{h}}
\tau(\sigma)(A) \sigma^{\otimes n} (Y) \mu_I(\dd\sigma)\]
for all $Y \in \mathcal{B}_n,\, A \in \mathcal{B}_0,\,
n=1,2,\cdots$. In other words each $\rho_n$ is separable in
the bipartite product $\mathfrak{h}_0 \otimes
[\mathfrak{h}]^{\otimes n}$. This completes the proof.
\end{proof}
%%%%%%%%%%%%%%%%%%%%%%%%%%%%%%%%%%%%%%%
\section{Conclusion}

\par Motivated by the notions of extendability  and complete
extendability of finite level states as described by Doherty
et al \cite{PhysRevA.69.022308} we introduce similar
definitions for Gaussian states. A necessary and sufficient
condition is obtained for the complete extendability of a
bipartite Gaussian state in terms of its covariance
matrix.Using only the properties of covariance matrices we
show the separability of any finite mode bipartite Gaussian
state is equivalent to its complete extendability.  The question of finding a
necessary and sufficient condition for $k$-extendability
remains open. By exploiting a version of the quantum de
Finetti theorem as in Hudson and Moody \cite{MR0397421}, and
Gleason's theorem \cite{MR0096113}, we prove the equivalence
of separability and complete extendibility of a bipartite
state whenever one of
the Hilbert spaces is of dimension greater than 2. Since the
C* algebra of all bounded operators of an infinite
dimensional separable Hilbert space is not separable, our
result is also an extension of Fannes, Lewis, and Verbeure \cite{MR948360}.

%%%%%%%%%%%%%%%%%%%%%%%%%%%%%%%%%%%%%%%%%%%
\bibliographystyle{acm}
\bibliography{biblio}

%%%%%%%%%%%%%%%%%%%%%%%%%%%%%%%%%%%%%%%%%%%%%%
\end{document}